\newtheorem{theorem}{Theorem}[section]
\newtheorem{example}{Example}[section]
\newtheorem{remark}{Remark}[section]
\begin{document}

\onehalfspacing

\title{Nonlinear Regression without i.i.d. Assumption}

\author{Qing Xu\footnote{UniDT. Email: qing.xu@unidt.com} and Xiaohua (Michael) Xuan\footnote{UniDT. Email: michael.xuan@unidt.com}~\footnote{This work is partially supported by Smale Institute.}}

\date{}

\maketitle

\begin{abstract}
In this paper, we consider a class of nonlinear regression problems without the assumption of being independent and identically distributed. We propose a correspondent mini-max problem for nonlinear regression and give a numerical algorithm. Such an algorithm can be applied in regression and machine learning problems, and yield better results than traditional least square and machine learning methods.
\end{abstract}
\bigskip

\section{Introduction}

In statistics, linear regression is a linear approach for modelling the relationship between an explaining variable $y$ and one or more explanatory variables denoted by $x$.
\begin{equation}\label{linearmodel}
y=w^Tx+b+\varepsilon.
\end{equation}

The parameters $w,b$ can be estimated via the method of least square by the following famous theorem.

\begin{theorem}
Suppose $\{(x_i,y_i)\}_{i=1}^m$ are drawn from the linear model (\ref{linearmodel}) with error terms $\varepsilon_1,\varepsilon_2,\cdots,\varepsilon_m$ being independent Gaussian variables with mean $0$ and variance $\sigma^2$. Then the result of least square is
$$(w_1,w_2,\cdots,w_d,b)^T =A^+c.$$
Here,
\[
A =
\begin{pmatrix}
x_{11}&x_{12}&\cdots&x_{1d}&1\\
x_{21}&x_{22}&\cdots&x_{2d}&1\\
\cdots&\cdots&\cdots&\cdots&\cdots\\
x_{m1}&x_{m2}&\cdots&x_{md}&1\\
\end{pmatrix},\quad
c=
\begin{pmatrix}
y_1\\
y_2\\
\cdots\\
y_m\\
\end{pmatrix}.
\]
$A^+$ is the Moore-Penrose inverse\footnote{For the definition and property of Moore-Penrose inverse, see \cite{ben}.} of $A$.
\end{theorem}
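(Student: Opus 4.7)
The plan is to reduce the statement to a pure optimization problem and then appeal to the defining properties of the Moore--Penrose inverse. First, I would observe that under the Gaussian error assumption, the log-likelihood of the observations $\{(x_i,y_i)\}_{i=1}^m$ given parameters $\beta=(w_1,\dots,w_d,b)^T$ is, up to an additive constant independent of $\beta$, equal to $-\frac{1}{2\sigma^2}\sum_{i=1}^m (y_i-w^T x_i-b)^2 = -\frac{1}{2\sigma^2}\|A\beta-c\|_2^2$. Hence maximum likelihood estimation coincides with least squares, and the theorem reduces to the algebraic claim that $\beta^{*}=A^+c$ minimizes $\|A\beta-c\|_2^2$ over $\beta\in\mathbb{R}^{d+1}$.

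Second, I would translate the minimization into the normal equations. By first-order optimality applied to the convex quadratic $\|A\beta-c\|_2^2$, any minimizer $\beta^{*}$ must satisfy $A^T A\beta^{*}=A^T c$, and conversely every solution of these normal equations attains the minimum. This step is standard and amounts to the geometric fact that the residual $A\beta^{*}-c$ must be orthogonal to the column space of $A$.

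The main content is then to check that $\beta^{*}=A^+ c$ satisfies the normal equations, without any rank assumption on $A$. Here I would invoke two of the Moore--Penrose defining identities, namely $AA^+A=A$ and the symmetry of $AA^+$. A short computation then gives
\begin{equation*}
A^T A (A^+ c) \;=\; A^T (AA^+) c \;=\; A^T (AA^+)^T c \;=\; (AA^+A)^T c \;=\; A^T c,
\end{equation*}
which is exactly the normal equation. If $A$ has a non-trivial kernel there are infinitely many least-squares solutions, so to single out $A^+c$ I would additionally use the range-space property $A^+c\in\mathrm{range}(A^T)$, which characterizes it as the unique minimum-norm solution.

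The step I expect to be the main obstacle is the last algebraic verification carried out in full generality: it requires comfort with all four Moore--Penrose axioms and, in the rank-deficient case, the minimum-norm characterization needed to make the formula $\beta^{*}=A^+c$ well-posed. These foundational facts are classical and can be imported directly from \cite{ben}.
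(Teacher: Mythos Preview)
Your argument is correct and is the standard route: pass from Gaussian likelihood to the least-squares objective $\|A\beta-c\|_2^2$, write down the normal equations $A^TA\beta=A^Tc$, and verify via the Moore--Penrose identities $AA^+A=A$ and $(AA^+)^T=AA^+$ that $\beta^*=A^+c$ solves them, with the minimum-norm characterization picking out $A^+c$ in the rank-deficient case.

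There is nothing to compare against, however: the paper does not prove this theorem. It is quoted in the introduction as a classical background result, with a footnote to \cite{ben} for the properties of the Moore--Penrose inverse, and the paper's own technical work begins only with Theorems~\ref{th1}--\ref{th3}. So your proposal is not a different route from the paper's proof; it simply supplies a proof where the paper offers none. If you wish to match the paper's level of detail, a one-line pointer to \cite{ben} would suffice.
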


In the above theorem, the errors $\varepsilon_1,\varepsilon_2,\cdots,\varepsilon_m$ are assumed to be independent Gaussian variables. Therefore, $y_1,y_2,\cdots,y_m$ are also independent Gaussian variables.

When the i.i.d. (independent and identically distributed) assumption is not satisfied, the usual method of least square does not work well. This can be illustrated by the following example.

\begin{example}\label{linear}
Denote by $\mathcal{N}(\mu,\sigma^2)$ the normal distribution with mean $\mu$ and variance $\sigma^2$ and denote by $\delta_{c}$ the Dirac distribution, i.e.,
\[\delta_c(A)=\left\{
\begin{split}
&1\quad c\in A,\\
&0\quad c\notin A.
\end{split}
\right.
\]

Suppose the sample data are generated by
$$y_i=1.75*x_i+1.25+\varepsilon_i,\quad i=1,2,\cdots,1517,$$
where
$$\varepsilon_1,\cdots,\varepsilon_{500}\sim\delta_{0.0325},
\quad \varepsilon_{501},\cdots,\varepsilon_{1000}\sim\delta_{0.5525},$$
$$\varepsilon_{1001},\cdots,\varepsilon_{1500}\sim\delta_{-0.27},\quad \varepsilon_{1501},\cdots,\varepsilon_{1517}\sim\mathcal{N}(0,0.2).$$

Thus, the whole sample data are chosen as 
$$(x_{1},y_{1})=(x_{2},y_{2})=\cdots=(x_{500},y_{500})=(0.15,1.48),$$
$$(x_{501},y_{501})=(x_{502},y_{502})=\cdots=(x_{1000},y_{1000})=(0.43,1.45),$$
$$(x_{1001},y_{1001})=(x_{1002},y_{1002})=\cdots=(x_{1500},y_{1500})=(0.04,1.59),$$
$$(x_{1501},y_{1501})=(1.23,3.01),\quad(x_{1502},y_{1502})=(0.63,2.89),\quad(x_{1503},y_{1503})=(1.64,4.54),$$
$$(x_{1504},y_{1504})=(0.98,3.32),\quad(x_{1505},y_{1505})=(1.92,5.0),\quad(x_{1506},y_{1506})=(1.26,3.96),$$
$$(x_{1507},y_{1507})=(1.77,3.92),\quad(x_{1508},y_{1508})=(1.1,2.8),\quad(x_{1509},y_{1509})=(1.22,2.84),$$
$$(x_{1510},y_{1510})=(1.48,4.52),\quad(x_{1511},y_{1511})=(0.71,3.17),\quad(x_{1512},y_{1512})=(0.77,2.59),$$
$$(x_{1513},y_{1513})=(1.89,5.1),\quad(x_{1514},y_{1514})=(1.31,3.17),\quad(x_{1515},y_{1515})=(1.31,2.91),$$
$$\quad(x_{1516},y_{1516})=(1.63,4.02),\quad(x_{1517},y_{1517})=(0.56,1.79).$$

The result of the usual least square is 
$$y = 0.4711*x+1.4258,$$
which is dispalyed in Figure 1.

\begin{figure}[H]
	\center{\includegraphics[width=7cm]{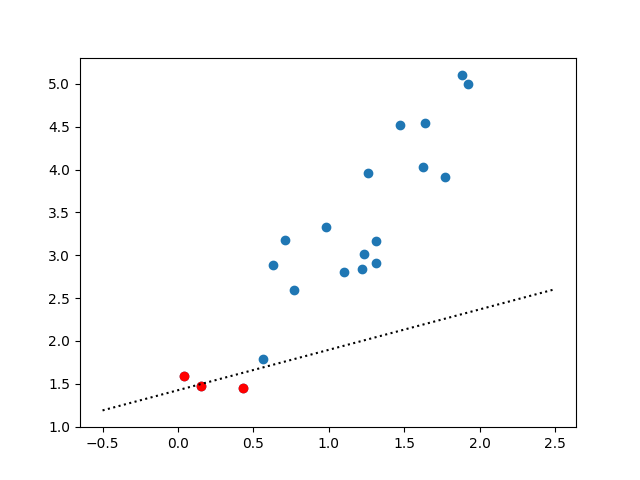}}
	\caption{\small Result of Least Square}
\end{figure}

\end{example}

We can see from Figure 1 that most of the sample data deviates from the regression line. The main reason is that $(x_1,y_1),(x_2,y_2),\cdots,(x_{500},y_{500})$ are the same sample and the i.i.d. condition is violated.

For overcoming the above difficulty, Lin\cite{lin} study the linear regression without i.i.d. condition by using the nonlinear expectation framework laid down by Peng\cite{peng}. They split the training set into several groups and in each group the i.i.d. condition can be satisfied. The average loss is used for each group and the maximum of average loss among groups is used as the final loss function. They show that the linear regression problem under the nonlinear expectation framework is reduced to the following mini-max problem.
\begin{equation}\label{lin}
\min_{w,b}\max_{1\leq j\leq N}\frac{1}{M}\sum_{l=1}^M(w^Tx_{jl}+b-y_{jl})^2.
\end{equation}
They suggest a genetic algorithm to solve this problem. However, such a genetic algorithm does not work well generally.

Motivated by the work of Lin\cite{lin} and Peng\cite{peng}, we consider nonlinear regression problems without the assumption of i.i.d. in this paper. We propose a correspondent mini-max problems and give a numerical algorithm for solving this problem. Meanwhile, problem (\ref{lin}) in Lin's paper can also be well solved by such an algorithm. We also have done some experiments in least square and machine learning problems.

\section{Nonlinear Regression without i.i.d. Assumption}

Nonlinear regression is a form of regression analysis in which observational data are modeled by a nonlinear function which depends on one or more explanatory variables.~(see e.g. \cite{nr})

Suppose the sample data (training set) is
$$S=\{(x_1,y_1),(x_2,y_2),\cdots,(x_m,y_m)\},$$
where $x_i\in X$~and~$y_i\in Y$. $X$ is called the input space and $Y$ is called the output (label) space. The goal of nonlinear regression is to find (learn) a function $g^\theta:X\rightarrow Y$ from the hypothesis space~$\{g^\lambda:X\rightarrow Y|\lambda\in \Lambda\}$~such that~$g^\theta(x_i)$~is as close to $y_i$ as possible.

The closeness is usually characterized by a loss function $\varphi$ such that $\varphi(g^\theta(x_1),y_1,\cdots,g^\theta(x_m),y_m)$ attains its minimum 
if and only if
$$g^\theta(x_i)-y_i=0,\quad 1\leq i\leq m.$$

Then the nonlinear regression problem (learning problem) is reduced to an optimization problem of minimizing $\varphi$.

Following are two kinds of loss functions, namely, the average loss and the maximal loss.

$$\varphi_2=\frac{1}{m}\sum_{j=1}^m(g^\theta(x_j)-y_j)^2.$$
$$\varphi_\infty=\max_{1\leq j\leq m}(g^\theta(x_j)-y_j)^2.$$

The average loss is popular, particularly in machine learning, since it can be conveniently minimized using online algorithms, which process few instances at each iteration. The idea behinds the average loss is to learn a function that performs equally well for each training point. However, when the i.i.d. assumption is not satisfied, the average loss way may become a problem.

To overcome this difficulty, we use the max-mean as the loss function. First, we split the training set into several groups and in each group the i.i.d. condition can be satisfied. Then the average loss is used for each group and the maximum of average loss among groups is used as the final loss function. We propose the following mini-max problem for nonlinear regression problem.
\begin{equation}\label{our}
\min_{\theta}\max_{1\leq j\leq N}\frac{1}{n_j}\sum_{l=1}^{n_j}(g^\theta(x_{jl})-y_{jl})^2.
\end{equation}
Here, $n_j$ is the number of samples in group $j$.

Problem (\ref{our}) is a generalization of problem (\ref{lin}). Next, we will give a numerical algorithm which solves problem (\ref{our}).

\begin{remark}
 Peng and Jin\cite{han} put forward a max-mean method to give the parameter estimation when the usual i.i.d. condition is not satisfied. They show that if $Z_1,Z_2,\cdots,Z_k$ are drawn from the maximal distribution $M_{[\underline{\mu},\overline{\mu}]}$ and are nonlinearly independent, then the optimal unbiased estimation for $\overline{\mu}$ is
$$\max\{Z_1,Z_2,\cdots,Z_k\}.$$
This fact, combined with the Law of Large Numbers (Theorem 19 in \cite{han}) leads to the max-mean estimation of $\mu$. We borrow this idea and use the max-mean as the loss function for the nonlinear regression problem.
\end{remark}

\section{Algorithm}
Problem (\ref{our}) is a mini-max problem. The mini-max problems arise in different kinds of mathematical fields, such as game theory and the worst-case optimization. The general mini-max problem is described as
\begin{equation}\label{minimax}
\min_{u\in\mathbb{R}^n}\max_{v\in V}h(u,v).
\end{equation}
Here,~$h$~is continuous on $\mathbb{R}^n\times V$ and differentiable with respect to $u$.

Problem (\ref{minimax}) was considered theoretically by Klessig and Polak\cite{klessig} in 1973 and Panin\cite{panin} in 1981. Later in 1987, Kiwiel\cite{kiwiel} gave a concrete algorithm for problem (\ref{minimax}). Kiwiel's algorithm deals with the general case in which $V$ is a compact subset of $\mathbb{R}^d$ and the convergence could be slow when the number of parameters is large.

In our case, $V=\{1,2,\cdots,N\}$ is a finite set and we will give a simplified and faster algorithm.

Denote
$$f_j(u)=h(u,j)=\frac{1}{n_j}\sum_{l=1}^{n_j}(g^u(x_{jl})-y_{jl})^2,\quad \Phi(u)=\max_{1\leq j\leq N}f_j(u).$$

Suppose each $f_j$ is differentiable. Now we outline the iterative algorithm for the following discrete mini-max problem
$$\min_{u\in \mathbb{R}^n}\max_{1\leq j\leq N}f_j(u).$$

The main difficulty is to find the descent direction at each iteration point $u_k(k=0,1,\cdots)$ since $\Phi$ is nonsmooth in general.
In light of this, we linearize $f_j$ at $u_k$ and obtain the convex approximation of $\Phi$ as
$$\hat{\Phi}(u)=\max_{1\leq j\leq N}\{f_j(u_k)+\langle \nabla f_j(u_k),u-u_k\rangle\}.$$

Next step is to find $u_{k+1}$, which minimizes $\hat{\Phi}(u)$. In general, $\hat{\Phi}$ is not strictly convex with respect to $u$, thus may not admit a minimum. So a regularization term is added and the minimization problem becomes
$$\min_{u\in\mathbb{R}^n}\left\{\hat{\Phi}(u)+\frac{1}{2}\|u-u_k\|^2\right\}.$$

By setting $d=u-u_k$, the above can be converted to the following form
\begin{equation}\label{kk}
\min_{d\in\mathbb{R}^n}\left\{\max_{1\leq j\leq N}\{f_j(u_k)+\langle \nabla f_j(u_k),d\rangle\}+\frac{1}{2}\|d\|^2\right\},
\end{equation}
which is equivalent to
\begin{equation}\label{prim}
\min_{d,a}\quad \left(\frac{1}{2}\|d\|^2+a\right)
\end{equation}
\begin{equation}\label{rest}
\text{s.t.}~f_j(u_k)+\langle \nabla f_j(u_k),d\rangle\leq a,~\forall~1\leq j\leq N.
\end{equation}

Problem (\ref{prim})-(\ref{rest}) is a semi-definite QP (quadratic programming) problem. When $n$ is large, the popular QP algorithms (such as active-set method) are time-consuming. So we turn to the dual problem.

\begin{theorem}\label{th1}
Denote $G=\nabla f\in \mathbb{R}^{N\times n},f=(f_1,\cdots,f_N)^T$. If $\lambda$ is the solution of the following QP problem
\begin{equation}\label{dual1}
\min_\lambda \left(\frac{1}{2}\lambda^TGG^T\lambda -f^T\lambda \right)
\end{equation}
\begin{equation}\label{dual2}
\mathrm{s.t.}~\sum_{i=1}^{N}\lambda_i =1,\lambda_i\geq0.
\end{equation}
Then $d=-G^T\lambda$ is the solution of problem (\ref{prim})-(\ref{rest}).
\end{theorem}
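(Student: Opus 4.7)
The plan is to obtain problem (\ref{dual1})--(\ref{dual2}) as the Lagrangian dual of the convex quadratic program (\ref{prim})--(\ref{rest}) and then invoke strong duality to conclude $d=-G^T\lambda$. First I would introduce a nonnegative multiplier $\lambda_j$ for each constraint in (\ref{rest}) and form the Lagrangian
$$L(d,a,\lambda)=\tfrac{1}{2}\|d\|^2+a+\sum_{j=1}^N\lambda_j\bigl(f_j(u_k)+\langle\nabla f_j(u_k),d\rangle-a\bigr).$$
Since $(d,a)$ is unconstrained, I would eliminate it by computing the unconstrained infimum of $L$ in $(d,a)$.

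Next, differentiating in $a$ gives $1-\sum_{j}\lambda_j=0$, so finiteness of the infimum forces the simplex constraint $\sum_j\lambda_j=1$ in (\ref{dual2}); otherwise $\inf_a L=-\infty$. Differentiating in $d$ gives the stationarity condition $d+\sum_j\lambda_j\nabla f_j(u_k)=0$, i.e.\ $d=-G^T\lambda$, using the definition $G=\nabla f\in\mathbb{R}^{N\times n}$. Substituting these two relations back into $L$ collapses the linear-in-$a$ term and yields the dual objective
$$-\tfrac{1}{2}\lambda^T GG^T\lambda+f^T\lambda,$$
so maximizing over the simplex is equivalent to the minimization problem (\ref{dual1})--(\ref{dual2}).

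Finally I would appeal to strong duality. Problem (\ref{prim})--(\ref{rest}) is a convex QP with affine inequality constraints, and strict feasibility is immediate: pick any $d\in\mathbb{R}^n$ and any $a>\max_j\{f_j(u_k)+\langle\nabla f_j(u_k),d\rangle\}$. Hence Slater's condition holds and there is no duality gap, so any dual optimizer $\lambda^\ast$ together with the primal optimizer $(d^\ast,a^\ast)$ satisfies the KKT system. The stationarity conditions derived above then read $d^\ast=-G^T\lambda^\ast$, which is the claim.

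The only step requiring real care is the justification of strong duality and of the interchange that identifies the primal optimizer with $-G^T\lambda^\ast$; once Slater's condition is in hand, the rest is purely algebraic manipulation of the Lagrangian. I do not expect any genuine obstacle: the argument is the standard quadratic-program duality calculation, specialized to the epigraphic reformulation of a pointwise maximum of affine functions.
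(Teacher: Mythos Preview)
Your proposal is correct and follows essentially the same route as the paper: form the Lagrangian of (\ref{prim})--(\ref{rest}), eliminate $a$ to obtain the simplex constraint, eliminate $d$ to obtain $d=-G^T\lambda$, and invoke strong duality for convex QPs to identify the primal optimizer with $-G^T\lambda^\ast$. The only difference is cosmetic---you justify strong duality via Slater's condition and phrase the conclusion through KKT stationarity, whereas the paper simply cites the strong duality theorem and reads off $d$ from the inner minimization---but the argument is the same.
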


\begin{proof}
See Appendix.
\end{proof}

\begin{remark}
Problem (\ref{dual1})-(\ref{dual2})	can be solved by many standard methods, such as active-set method(see e.g.\cite{no}). The dimension of the dual problem(\ref{dual1})-(\ref{dual2}) is $N$(number of groups), which is independent of $n$(number of parameters). Hence, the algorithm is fast and stable, especially in deep neural networks.
\end{remark}

Set $d_k=-G^T\lambda$. Next Theorem shows that $d_k$ is a descent direction.

\begin{theorem}\label{th2}
If $d_k\neq 0$, then there exists $t_0>0$ such that
	$$\Phi(u_{k}+td_k)<\Phi(u_k),\quad \forall~ t\in (0,t_0).$$ 
\end{theorem}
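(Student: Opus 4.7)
The plan is to connect the descent property of $d_k$ to the variational characterization given by Theorem~\ref{th1}. Since $d_k=-G^T\lambda$ solves the primal problem (\ref{prim})--(\ref{rest}), which is equivalent to the unconstrained problem (\ref{kk}), I know that
$$\max_{1\le j\le N}\{f_j(u_k)+\langle\nabla f_j(u_k),d_k\rangle\}+\tfrac12\|d_k\|^2 \;=\; \min_{d\in\mathbb{R}^n}\Bigl(\max_{1\le j\le N}\{f_j(u_k)+\langle\nabla f_j(u_k),d\rangle\}+\tfrac12\|d\|^2\Bigr).$$
The first thing I would do is evaluate the right-hand side at the feasible point $d=0$, which yields exactly $\Phi(u_k)$. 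Because the objective in (\ref{kk}) is strongly convex in $d$ (due to the quadratic term $\frac12\|d\|^2$), its minimizer is unique, so the hypothesis $d_k\neq 0$ forces a strict inequality between the optimum and the value at $d=0$.

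Next I would translate this strict inequality into usable information about the directional derivatives of the active pieces. Subtracting $\frac12\|d_k\|^2$ from both sides gives
$$\max_{1\le j\le N}\{f_j(u_k)+\langle\nabla f_j(u_k),d_k\rangle\} \;<\; \Phi(u_k)-\tfrac12\|d_k\|^2.$$
Define the active index set $J(u_k)=\{j:f_j(u_k)=\Phi(u_k)\}$. For every $j\in J(u_k)$ the previous display reduces to $\langle\nabla f_j(u_k),d_k\rangle<-\tfrac12\|d_k\|^2<0$, which is the crucial first-order descent information.

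To finish I would run a Taylor expansion along the ray $t\mapsto u_k+td_k$ and split the finitely many indices into two groups. For $j\in J(u_k)$,
$$f_j(u_k+td_k)=\Phi(u_k)+t\langle\nabla f_j(u_k),d_k\rangle+o(t),$$
so the strictly negative directional derivative guarantees $f_j(u_k+td_k)<\Phi(u_k)$ for all sufficiently small $t>0$; each such $j$ contributes its own threshold $t_j>0$. For $j\notin J(u_k)$ we have the strict gap $f_j(u_k)<\Phi(u_k)$, and continuity of $f_j$ produces a threshold $t_j>0$ below which $f_j(u_k+td_k)<\Phi(u_k)$ still holds.

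The main (minor) obstacle is handling these two regimes uniformly, but it is resolved by the finiteness of the index set: setting $t_0=\min_{1\le j\le N}t_j>0$ yields $f_j(u_k+td_k)<\Phi(u_k)$ for every $j$ and every $t\in(0,t_0)$, hence $\Phi(u_k+td_k)=\max_j f_j(u_k+td_k)<\Phi(u_k)$, which is precisely the claim. The essential analytic ingredients are thus just strong convexity of (\ref{kk}) (to rule out $d_k=0$ being optimal) and the finiteness of $N$ (so that a single $t_0$ works for all pieces).
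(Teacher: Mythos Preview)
Your proof is correct and rests on the same key observation as the paper's: evaluating the objective of (\ref{kk}) at $d=0$ shows that the optimal value is at most $\Phi(u_k)$, and since $d_k\neq 0$ this forces $\max_j\{f_j(u_k)+\langle\nabla f_j(u_k),d_k\rangle\}\le\Phi(u_k)-\tfrac12\|d_k\|^2$. The execution differs slightly. You split the indices into the active set $J(u_k)$ (handled by Taylor expansion and the strict directional-derivative bound) and the inactive indices (handled by continuity), then invoke finiteness of $N$ to extract a common $t_0$. The paper instead bounds $\Phi(u_k+td_k)-\Phi(u_k)$ in one stroke: it writes $f_j(u_k)+t\langle\nabla f_j(u_k),d_k\rangle-\Phi(u_k)$ as the convex combination $t\bigl(f_j(u_k)+\langle\nabla f_j(u_k),d_k\rangle-\Phi(u_k)\bigr)+(1-t)\bigl(f_j(u_k)-\Phi(u_k)\bigr)$ and uses $f_j(u_k)\le\Phi(u_k)$ to drop the second term, obtaining directly $\Phi(u_k+td_k)-\Phi(u_k)\le -\tfrac{t}{2}\|d_k\|^2+o(t)$ without ever distinguishing active from inactive indices. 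Your active/inactive split is the standard textbook route and is perhaps more transparent; the paper's convex-combination trick is more compact and yields an explicit quantitative rate in one line.
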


\begin{proof}
	See Appendix.
\end{proof}

For a function $F$, the directional derivative of $F$ at $x$ in a direction $d$ is defined as 
$$F'(x;d):=\lim_{t\rightarrow 0+}\frac{F(x+td)-F(x)}{t}.$$

The necessary optimality condition for a function $F$ to attain its minimum (see \cite{dm}) is 
$$F'(x;d)\geq 0,~\forall d\in\mathbb{R}^n.$$ 
$x$ is called a stationary point of $F$.

Theorem \ref{th2} shows that when $d_k\neq 0$, we can always find a descent direction. Next Theorem reveals that when $d_k=0$, $u_k$ is a stationary point. 

\begin{theorem}\label{th3}
If $d_k=0$, then $u_k$ is a stationary point of $\Phi$, i.e., 
$$\Phi'(u_k;d)\geq 0,~\forall d\in\mathbb{R}^n.$$ 
\end{theorem}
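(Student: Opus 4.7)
The plan is to exploit the KKT structure of the primal problem (\ref{prim})-(\ref{rest}) that $d_k$ solves by Theorem \ref{th1}, combined with the standard formula for the directional derivative of a pointwise maximum of smooth functions.

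First, I would record the KKT conditions. The Lagrangian of (\ref{prim})-(\ref{rest}) in variables $(d,a)$ with multipliers $\lambda_j\ge 0$ reads
\begin{equation*}
L(d,a,\lambda)=\tfrac{1}{2}\|d\|^2+a+\sum_{j=1}^N\lambda_j\bigl(f_j(u_k)+\langle\nabla f_j(u_k),d\rangle-a\bigr).
\end{equation*}
Stationarity in $d$ gives $d=-\sum_j\lambda_j\nabla f_j(u_k)=-G^T\lambda$, stationarity in $a$ gives $\sum_j\lambda_j=1$, and complementary slackness gives $\lambda_j\bigl(f_j(u_k)+\langle\nabla f_j(u_k),d\rangle-a\bigr)=0$. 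By Theorem \ref{th1}, the $\lambda$ from the dual QP supplies these multipliers, and our hypothesis $d_k=0$ means $\sum_j\lambda_j\nabla f_j(u_k)=0$.

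Second, I would identify the active set $J(u_k)=\{j:f_j(u_k)=\Phi(u_k)\}$. From $d_k=0$, the primal optimal value equals $a=\Phi(u_k)$ (since $a\ge\max_j f_j(u_k)$ and the objective $\tfrac12\|d\|^2+a$ is minimized), so complementary slackness forces $f_j(u_k)=\Phi(u_k)$ whenever $\lambda_j>0$; in other words, $\mathrm{supp}(\lambda)\subseteq J(u_k)$.

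Third, I would invoke the well-known directional-derivative formula for the max of finitely many $C^1$ functions:
\begin{equation*}
\Phi'(u_k;d)=\max_{j\in J(u_k)}\langle\nabla f_j(u_k),d\rangle,\qquad d\in\mathbb{R}^n,
\end{equation*}
which follows from a short argument using continuity of $f_j$ and differentiability at $u_k$. Then for any fixed $d$, writing $M:=\Phi'(u_k;d)$, we have $\langle\nabla f_j(u_k),d\rangle\le M$ for every $j\in J(u_k)$, and $\lambda_j=0$ for $j\notin J(u_k)$, so
\begin{equation*}
0=\Bigl\langle\sum_{j}\lambda_j\nabla f_j(u_k),\,d\Bigr\rangle=\sum_{j\in J(u_k)}\lambda_j\langle\nabla f_j(u_k),d\rangle\le M\sum_{j\in J(u_k)}\lambda_j=M,
\end{equation*}
using $\lambda_j\ge 0$ and $\sum_j\lambda_j=1$. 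This yields $\Phi'(u_k;d)\ge 0$ for every $d$, which is the claim.

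The main obstacle is not any single estimate but the careful bookkeeping between the primal KKT conditions and the active set $J(u_k)$: one must justify (i) that the optimal $a$ in the primal truly equals $\Phi(u_k)$ when $d=0$, (ii) that complementary slackness then confines $\mathrm{supp}(\lambda)$ to $J(u_k)$, and (iii) the directional-derivative formula for $\Phi$. Once these are in place the convex-combination inequality closes the argument immediately.
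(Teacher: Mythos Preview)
Your proof is correct but follows a genuinely different route from the paper's argument. The paper works purely on the primal side: since $d=0$ minimizes the regularized linearization (\ref{kk}), one has
\[
\max_{1\le j\le N}\{f_j(u_k)+\langle\nabla f_j(u_k),d\rangle\}+\tfrac12\|d\|^2\ge \Phi(u_k)\qquad\forall\,d,
\]
and then, after observing that for small $\|d\|$ the outer maximum is attained on the active set $\Theta$, the paper substitutes $d=rd_1$, divides by $r$, and sends $r\to 0^+$ to strip off the quadratic term and obtain $\max_{j\in\Theta}\langle\nabla f_j(u_k),d_1\rangle\ge 0$. Your argument instead extracts the dual multipliers $\lambda$ via KKT, uses complementary slackness to confine $\mathrm{supp}(\lambda)$ to the active set, and closes with the convex-combination inequality $0=\sum_j\lambda_j\langle\nabla f_j,d\rangle\le \Phi'(u_k;d)$. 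The paper's approach is slightly more elementary in that it never invokes KKT theory or strong duality explicitly, relying only on the optimality of $d=0$ and a scaling trick; your approach is more structural, since it makes explicit that $0\in\mathrm{conv}\{\nabla f_j(u_k):j\in J(u_k)\}$, which is exactly the Clarke stationarity condition for a finite max-function and connects cleanly to the general nonsmooth theory. Both proofs share the directional-derivative formula $\Phi'(u_k;d)=\max_{j\in J(u_k)}\langle\nabla f_j(u_k),d\rangle$ as the common ingredient.
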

\begin{proof}
See Appendix.
\end{proof}
\begin{remark}
When each $f_j$ is a convex function, $\Phi$ is also a convex function. Then the stationary point of $\Phi$ becomes the global minimum point.
\end{remark}

With $d_k$ being the descent direction, we can use line search to find the appropriate step size and update the iteration point.
\bigskip

Now let us conclude the above discussion by giving the concrete steps of the algorithm for the following mini-max problem.
\begin{equation}\label{e1}
\min_{u\in \mathbb{R}^n}\max_{1\leq j\leq N}f_j(u).
\end{equation}
\bigskip

\noindent\textbf{Algorithm.}

\noindent\textbf{Step 1. Initialization}

Select arbitrary~$u_0\in \mathbb{R}^n$.~Set~$k=0$, termination accuracy $\xi = 10^{-8}$, gap tolerance $\delta=10^{-7}$ and step size factor $\sigma=0.5$.
\bigskip

\noindent\textbf{Step 2. Finding Descent Direction }

Assume that we have chosen~$u_k$. Compute the Jacobian matrix

$$G=\nabla f(u_k)\in\mathbb{R}^{N\times n},$$
where
$$f(u)=(f_1(u),\cdots,f_N(u))^T.$$

Solve the following quadratic programming problem with gap tolerance $\delta$.(see e.g.\cite{no})

$$\min_{\lambda}\left(\frac{1}{2}\lambda^TGG^T\lambda -f^T\lambda \right)$$
$$\mathrm{s.t.}~\sum_{i=1}^{N}\lambda_i =1,\lambda_i\geq0.$$

Take $d_k=-G^T\lambda$. If $\|d_k\|<\xi$, stop. Otherwise, goto Step 3.
\bigskip

\noindent\textbf{Step 3. Line Search}

Find the smallest natural number $j$ such that
$$\Phi(u_k+\sigma^j d_k)<\Phi(u_k).$$

Take $\alpha_k = \sigma^j$ and set $u_{k+1} = u_k +\alpha_kd_k,~k=k+1$. Goto Step 2.
\bigskip

\section{Experiments}

\subsection{The Linear Regression Case}

Example \ref{linear} can be numerically well solved by the above algorithm with
$$f_j(w,b)=(wx_j+b-y_j)^2,\quad j=1,2,\cdots,1517.$$
The corresponding optimization problem is
$$\min_{w,b}\max_{1\leq j\leq 1517}(wx_j+b-y_j)^2.$$
The numerical result using the algorithm in section 3 is
$$y = 1.7589*x+1.2591.$$

\begin{figure}[H]
	\center{\includegraphics[width=7cm]{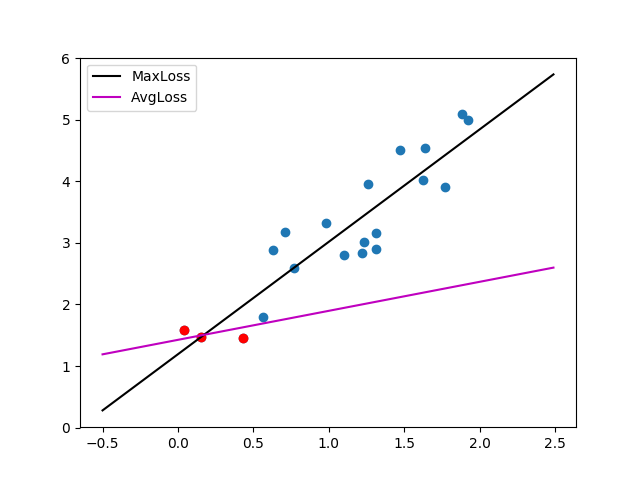}}
	\caption{\small Results of two methods}
\end{figure}

Figure 2 summarize the result. It can be seen that the mini-max method (black line) performs better than the traditional least square method (pink line).

Next, we compare the two methods and both MSE (mean squared error) and MAE (mean absolute error) are used to measure of the estimation of $w$ and $b$.

$$\text{MSE}:=\frac{1}{2}\left((w-\hat{w})^2+(b-\hat{b})^2\right).$$
$$\text{MAE}:=\frac{1}{2}\left(|w-\hat{w}|+|b-\hat{b}|\right).$$

\begin{table}[H]\centering
	\begin{tabular}{lll}
		\hline\hline
		Method             & MSE      & MAE      \\
		\hline
		Traditional Method   & 0.8178       & 0.6439         \\
		Mini-max Method      & 0.0154     & 0.0924           \\
		\hline
	\end{tabular}
	\caption{Comparisons of the two methods}\label{tab:hh}
\end{table}

We can see from the above table that mini-max method outperform the traditional method in both MSE and MAE.

Lin et al.\cite{lin} have mentioned that the above problem can be solved by genetic algorithm. However, the genetic algorithm is heuristic and unstable especially when the number of group is large. In contrast, our algorithm is fast and stable and the convergence is proved.

\subsection{The Machine Learning Case}

We further test the proposed method by using CelebFaces Attributes Dataset (CelebA)\footnote{see http://mmlab.ie.cuhk.edu.hk/projects/CelebA.html} and implement the mini-max algorithm with deep learning approach. The dataset CelebA has 202599 face images among which 13193(6.5\%) has eyeglass. The objective is eyeglass detection. We use a single hidden layer neural network to compare the two different methods.

We randomly choose 20000 pictures as the training set among which 5\% has eyeglass labels. For the traditional method, the 20000 pictures are used as a whole. For the mini-max method, we separate the 20000 pictures into 20 groups. Only 1 group contains eyeglass pictures while the other 19 groups do not contain eyeglass pictures. In this way, the whole mini-batch is not i.i.d. while each subgroup is expected to be i.i.d..

The tradition method uses the following loss

$$\text{loss}=\frac{1}{20000}\sum_{i=1}^{20}\sum_{j=1}^{1000}(\sigma(Wx_{ij}+b)-y_{ij})^2.$$

The mini-max method uses the maximal group loss

$$\text{loss}=\max_{1\leq i\leq 20}\frac{1}{1000}\sum_{j=1}^{1000}(\sigma(Wx_{ij}+b)-y_{ij})^2.$$

Here, $\sigma$ is an activation function in deep learning such as the sigmoid function
$$\sigma(x)=\frac{1}{1+e^{-x}}.$$

\begin{figure}[H]
	\center{\includegraphics[width=7cm]{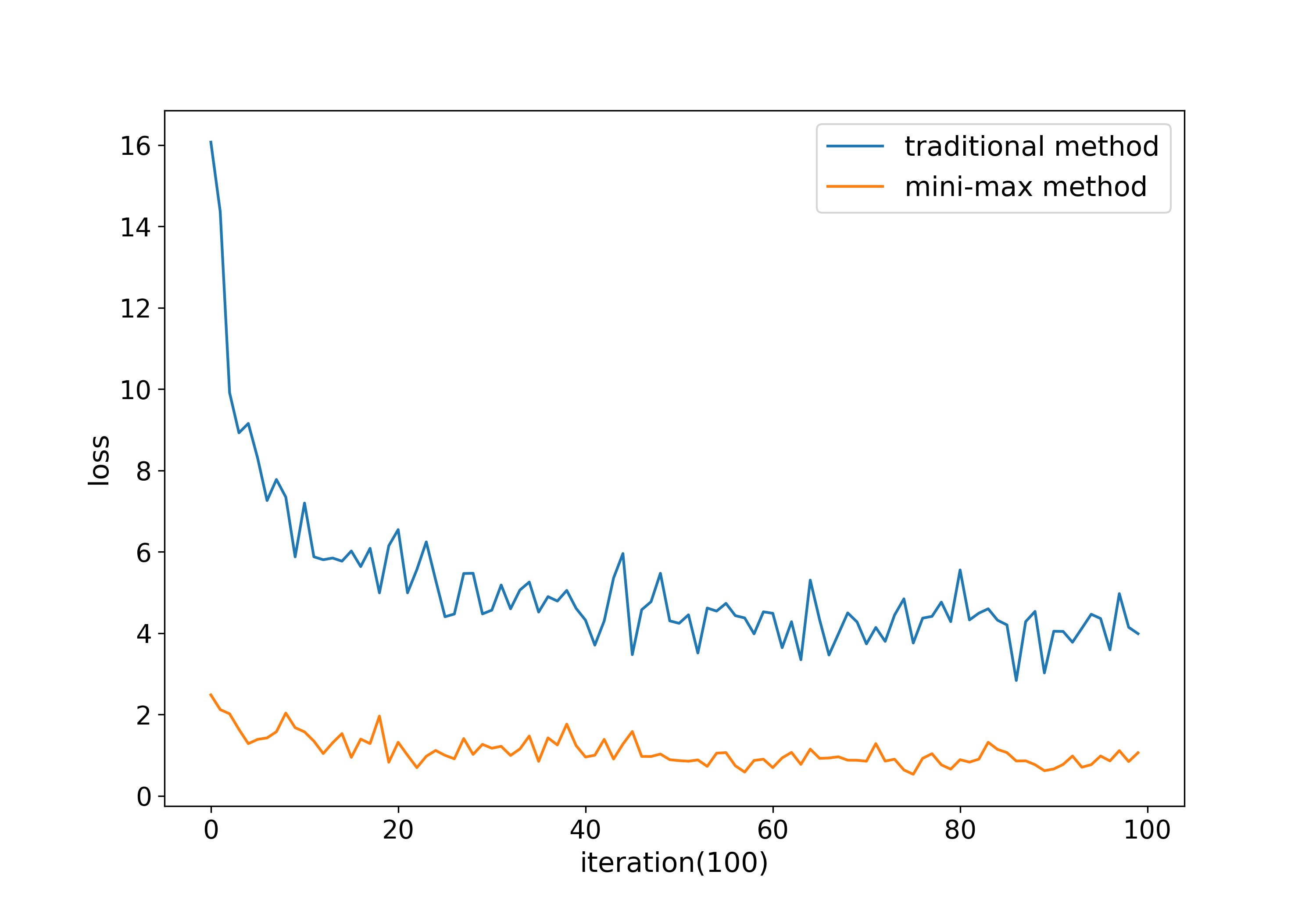}}
	\caption{\small Loss curves of the two methods.}
\end{figure}

We perform the two methods for 100 iterations. We can see from Figure 3 that the mini-max method converges much faster than the traditional method. Figure 4 also shows that the mini-max method performs better than the traditional method in accuracy. (Suppose the total number of the test set is $n$, and $m$ of them are classified correctly. Then the accuracy is defined to be $m/n$.)

\begin{figure}[H]
	\center{\includegraphics[width=7cm]{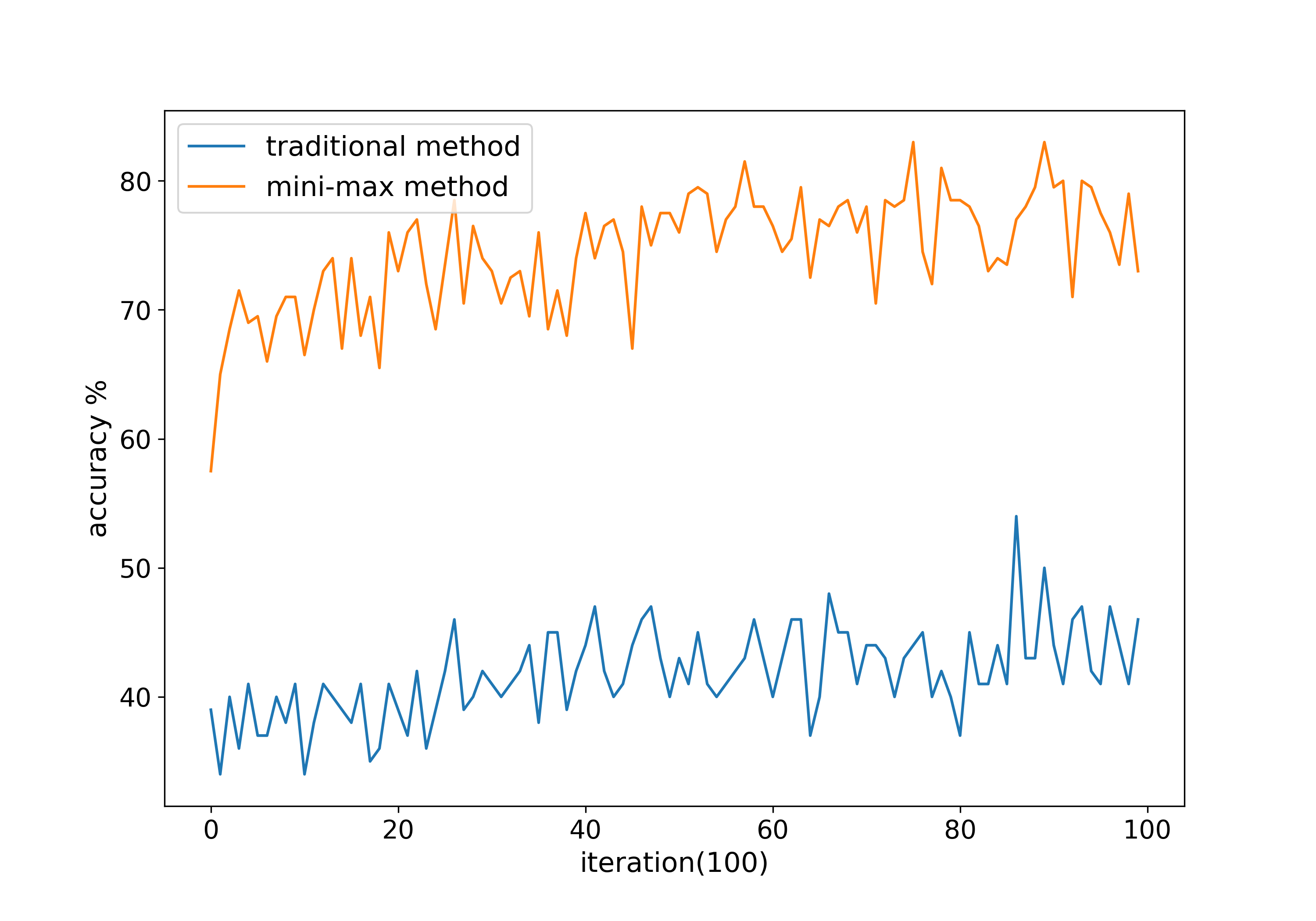}}
	\caption{\small Accuracy curves of the two methods.}
\end{figure}

The average accuracy for the mini-max method is 74.52\% while the traditional method is 41.78\%. Thus, in the deep learning approach with single layer, mini-max method helps to speed up convergence on unbalanced training data and improves accuracy as well. We also expect improvement with multi-layer deep learning approach. 

\section{Conclusion}

In this paper, we consider a class of nonlinear regression problems without the assumption of being independent and identically distributed. We propose a correspondent mini-max problem for nonlinear regression and give a numerical algorithm. Such an algorithm can be applied in regression and machine learning problems, and yield better results than least square and machine learning methods.

\section*{Acknowledgement}
The authors would like to thank Professor Shige Peng for useful discussions. We especially thank Xuli Shen for performing the experiment in machine learning case.

\section*{Appendix}

\subsection*{1. Proof of Theorem \ref{th1}}

	Consider the Lagrange function
	$$L(d,a;\lambda)=\frac{1}{2}\|d\|^2+a+\sum_{j=1}^{N}\lambda_j(f_j(u_k)+\langle \nabla f_j(u_k),d\rangle - a).$$	
	It is easy to verify that problem (\ref{prim})-(\ref{rest}) is equivalent to the following minimax problem.
	$$\min_{d,a}\max_{\lambda \geq 0}L(d,a;\lambda).$$	
	By strong duality theorem(see e.g. \cite{cvx}),
	$$\min_{d,a}\max_{\lambda \geq 0}L(d,a;\lambda)=\max_{\lambda \geq 0}\min_{d,a}L(d,a;\lambda).$$	
	Set $e=(1,1,\cdots,1)^T$, the above problem is equivalent to
	$$\max_{\lambda \geq 0}\min_{d,a}\left(\frac{1}{2}\|d\|^2+a+\lambda^T(f+Gd-ae)\right).$$	
	Note that
	\begin{equation*}
		\frac{1}{2}\|d\|^2+a+\lambda^T(f+Gd-ae)=\frac{1}{2}\|d\|^2+\lambda^T(f+Gd)+a(1-\lambda^Te).
	\end{equation*}	
	If $1-\lambda^Te\neq 0$, then the above is $-\infty$. Thus, we must have $1-\lambda^Te = 0$ when the maximum is attained. The problem is converted to
	$$\max_{\lambda_i \geq 0,\sum_{i=1}^N\lambda_i=1}\min_{d}\left(\frac{1}{2}\|d\|^2+\lambda^TGd+\lambda^Tf\right).$$
	The inner minimization problem has solution $d=-G^T\lambda$ and the above problem is reduced to
	$$\min_\lambda \left(\frac{1}{2}\lambda^TGG^T\lambda -f^T\lambda \right)$$
	$$\mathrm{s.t.}~\sum_{i=1}^{N}\lambda_i =1,\lambda_i\geq0.$$

\subsection*{2. Proof of Theorem \ref{th2}}
	
	Denote $u=u_k,d=d_k$. For $0<t<1$,
	\begin{align*}
	&\Phi(u+td)-\Phi(u)\\
	=&\max_{1\leq j\leq N}\{f_j(u+td)-\Phi(u)\}\\
	=&\max_{1\leq j\leq N}\{f_j(u)+t\langle \nabla f_j(u),d\rangle -\Phi(u)+o(t)\}\\
	\leq&\max_{1\leq j\leq N}\{f_j(u)+t\langle \nabla f_j(u),d\rangle -\Phi(u)\}+o(t)\\
	=&\max_{1\leq j\leq N}\{t(f_j(u)+\langle \nabla f_j(u),d\rangle -\Phi(u))+(1-t)(f_j(u)-\Phi(u))\}+o(t)\\
	&\quad\quad\quad \left(\text{Note that } f_j(u)\leq \Phi(u)=\max_{1\leq k\leq N}f_k(u)\right)\\
	\leq &t\max_{1\leq j\leq N}\{f_j(u)+\langle \nabla f_j(u),d\rangle -\Phi(u)\}+o(t).
	\end{align*}	
	Since $d$ is the solution of problem (\ref{kk}), we have that
	\begin{align*}
	&\max_{1\leq j\leq N}\left\{f_j(u)+\langle \nabla f_j(u),d\rangle+\frac{1}{2}\|d\|^2 \right\}\\
	\leq &\max_{1\leq j\leq N}\left\{f_j(u)+\langle \nabla f_j(u),0\rangle+\frac{1}{2}\|0\|^2 \right\}\\
	=& \max_{1\leq j\leq N}\{f_j(u)\}\\
	=&\Phi(u).
	\end{align*}	
	Therefore,
	\begin{align*}
	&\max_{1\leq j\leq N}\{f_j(u)+\langle \nabla f_j(u),d\rangle -\Phi(u)\}\leq -\frac{1}{2}\|d\|^2.\\
    \Rightarrow~ &\Phi(u+td)-\Phi(u)\leq -\frac{1}{2}t\|d\|^2+o(t).\\
	\Rightarrow~&\frac{\Phi(u+td)-\Phi(u)}{t}\leq -\frac{1}{2}\|d\|^2+o(1).\\
	\Rightarrow~&\limsup_{t\rightarrow0+}\frac{\Phi(u+td)-\Phi(u)}{t}\leq -\frac{1}{2}\|d\|^2<0.
	\end{align*}
	For $t>0$ small enough, we have that
	$$\Phi(u+td)<\Phi(u).$$	

\subsection*{3. Proof of Theorem \ref{th3}}
Denote $u=u_k$. Then, $d_k=0$ means that $\forall d$,
\begin{equation}\label{abb}
\max_{1\leq j\leq N}\{f_j(u)+\langle \nabla f_j(u),d\rangle\}+\frac{1}{2}\|d\|^2\geq \max_{1\leq j\leq N}f_j(u).
\end{equation}
Denote
$$M=\max_{1\leq j\leq N}f_j(u).$$
Define
$$\Theta=\Big\{j|f_j(u) = M,j=1,2,\cdots,N \Big\}.$$
Then (see \cite{dm})
\begin{equation}\label{kkk}
\Phi'(u;d)= \max_{j\in\Theta}\langle \nabla f_j(u),d\rangle. 
\end{equation}
When $\|d\|$ is small enough, we have that
\begin{align*}
&\max_{1\leq j\leq N}\{f_j(u)+\langle \nabla f_j(u),d\rangle\}\\
=&\max_{j\in\Theta}\{f_j(u)+\langle \nabla f_j(u),d\rangle\}\\
=&M+\max_{j\in\Theta}\langle \nabla f_j(u),d\rangle.
\end{align*}
In view of (\ref{abb}), we have that for $\|d\|$ small enough,
$$\max_{j\in\Theta}\langle \nabla f_j(u),d\rangle+\frac{1}{2}\|d\|^2\geq0.$$
For any $d_1\in\mathbb{R}^n$, by taking $d=rd_1$ with sufficient small $r>0$, we have that
$$\max_{j\in\Theta}\langle \nabla f_j(u),rd_1\rangle+\frac{r^2}{2}\|d_1\|^2\geq0.$$
$$\max_{j\in\Theta}\langle \nabla f_j(u),d_1\rangle+\frac{r}{2}\|d_1\|^2\geq0.$$
Let $r\rightarrow0+$,
$$\max_{j\in\Theta}\langle \nabla f_j(u),d_1\rangle\geq0.$$
Thus we fulfill the proof by combining with the fact (\ref{kkk}).

\bibliographystyle{unsrt}

\end{document}